\documentclass{lmcs}

\usepackage{enumitem}
\usepackage{amsmath}
\usepackage{amssymb}
\usepackage{listings}
\usepackage{xcolor}
\usepackage{booktabs}

\definecolor{codegreen}{rgb}{0,0.6,0}
\definecolor{codegray}{rgb}{0.5,0.5,0.5}
\definecolor{codepurple}{rgb}{0.58,0,0.82}
\definecolor{backcolour}{rgb}{0.95,0.95,0.92}

\title[Constructive Rice and Halting via MRDP]{A Constructive Proof of Rice's
  Theorem and the Halting Problem via Hilbert's Tenth Problem}

\author{Jonathan Brossard\lmcsorcid{0009-0004-8031-1438}}
\email{brossardj@acm.org}

\keywords{Rice's theorem, halting problem, constructive logic, intuitionistic logic,
  MRDP theorem, Hilbert's tenth problem, Rocq, step-indexed semantics,
  undecidability, formal verification, diagonalization-free}

\subjclass{Theory of computation $\to$ Logic; Theory of computation
  $\to$ Type theory; Security and privacy $\to$ Formal methods
  and theory of security}

\begin{document}
\microtypesetup{expansion=false}

\bibliographystyle{alphaurl}

\begin{abstract}
Rice's theorem states that no non-trivial semantic property of programs
is decidable. Classical proofs proceed by reduction from the halting
problem, invoking the law of excluded middle (LEM) twice: once through
diagonalization, and once through a case split on whether the always-diverging
program $\bot$ satisfies the property in question. We present a proof
that is \emph{constructive relative to the undecidability of Hilbert's
Tenth Problem} (MRDP): valid in intuitionistic logic, requiring neither
diagonalization nor self-reference, and adding no classical reasoning
beyond the MRDP assumption itself.

The key idea is a two-witness construction. Given a non-trivial property
$P$, we attach to each Diophantine polynomial $D$ a pair of programs
$S_D^0$, $S_D^1$ that behave like the negative and positive witnesses
for $P$ when $D$ is solvable, and both diverge identically when it is
not. A hypothetical decider for $P$ would therefore decide Diophantine
solvability via the difference $\delta_D = \mathrm{Decide}_P(S_D^1) -
\mathrm{Decide}_P(S_D^0)$ --- contradicting the MRDP theorem. The
argument is structured as two separate implications, never asserting a
disjunction about solvability, and never examining $P(\bot)$. The
undecidability of the halting problem follows as an immediate corollary:
a single application of Rice's theorem to the \texttt{Terminates}
property.

A formalization in the Rocq proof assistant%
\footnote{Rocq is the current name for the Coq proof assistant;
the rename occurred in 2023. The tool, language, and library
ecosystem are unchanged.}
confirms both results within a step-indexed model of computation, with
the undecidability of Hilbert's Tenth Problem as the sole external
axiom. Both \texttt{Rice\_Theorem} and \texttt{Halting\_Problem} are
\emph{closed under the global context}%
\footnote{In Rocq, \texttt{Print Assumptions T} lists every axiom
  \texttt{T} depends on. ``Closed under the global context'' means
  the list is empty: the theorem follows from the section hypotheses
  alone, with no additional classical or unproved axioms.}.
\end{abstract}

\maketitle

\section{Introduction}

Rice's theorem~\cite{rice1953} states that no non-trivial semantic property
of programs is decidable. We prove it constructively \emph{relative to
MRDP undecidability}: our proof is valid in intuitionistic logic and adds
no classical reasoning beyond the undecidability of Hilbert's Tenth Problem. The halting problem~\cite{turing1936} is the most famous specific instance: no
algorithm decides, for an arbitrary program and input, whether the program
halts. Standard textbook proofs of both results invoke the
law of excluded middle (LEM) in essential ways. For Rice's theorem, LEM appears in at least two
places: through diagonalization, which requires asserting that a hypothetically
constructed program either halts or diverges; and through a case split on
whether the always-diverging program $\bot$ satisfies the property $P$ in
question --- the so-called $P(\bot)$ split. For the halting problem, LEM
enters through self-reference: the diagonal program halts if and only if the
hypothetical decider says it does not. Our proof eliminates all of these uses of LEM simultaneously, for both
results and without any self-reference, by reducing directly to
Hilbert's Tenth Problem rather than to the halting problem.

\paragraph{Why a constructive proof matters.}
Constructive proofs of undecidability results carry computational content
that classical proofs lack. In intuitionistic type theory and in systems
such as Coq and Agda that are used for program extraction, a classical proof
cannot be directly executed or extracted into a verified program. A
constructive proof of Rice's theorem would, in principle, allow the
undecidability argument to be internalized in such systems without axioms
beyond the undecidability of Hilbert's Tenth Problem. It also contributes
to constructive reverse mathematics, where the goal is to identify precisely
which non-constructive principles are required for each theorem.
Furthermore, this proof can be seen as a \emph{non-branching reduction}:
information is encoded in the relational difference $\delta_D$ between two
programs rather than in a case distinction on a single program, a structural
property that may be of independent interest.

\paragraph{This paper.}
We present a proof of Rice's theorem that is valid in intuitionistic logic,
eliminating both classical steps. The key insight is to reduce to Hilbert's
Tenth Problem rather than to the halting problem directly. Given a
non-trivial semantic property $P$ with witnesses $e_0$ (failing $P$) and
$e_1$ (satisfying $P$), we construct for each Diophantine polynomial $D$ a
pair of programs $S_D^0$ and $S_D^1$ that behave like $e_0$ and $e_1$
respectively when $D$ has a solution, and both diverge identically when it
does not. A hypothetical decider $\mathrm{Decide}_P$ would therefore
distinguish the two cases via the difference $\delta_D =
\mathrm{Decide}_P(S_D^1) - \mathrm{Decide}_P(S_D^0)$, deciding Diophantine
solvability --- which is impossible by the MRDP theorem. The argument is
structured as two separate implications (solvable $\Rightarrow$ $\delta_D =
1$; unsolvable $\Rightarrow$ $\delta_D = 0$), never asserting the
disjunction ``$D$ solvable or unsolvable,'' and never examining $P(\bot)$.

\paragraph{Formalization.}
A formalization in the Rocq proof assistant accompanies the paper. The
formalization uses a step-indexed model of computation. \texttt{Rice\_Theorem}
and \texttt{Halting\_Problem} are proved without any classical axioms.
Every theorem in the development --- including \texttt{MRDP\_from\_LW} ---
uses zero classical axioms. The arithmetic encoding \texttt{poly\_encode\_correct}
is fully proved (zero admitted goals, zero classical axioms). The sole axiom is
\texttt{H10C\_SAT\_undec} (the MRDP undecidability result), a direct import
from \texttt{coq-library-undecidability}~\cite{larchey2022};
\texttt{LW\_H10c\_undec} is derived from it constructively.
No totality condition is imposed on the witnesses: the key design choice
is that the witness programs $S_D^b$ pass their full fuel budget to $e_b$
unchanged, so the fuel-offset problem that previously required totality
does not arise.

\paragraph{Novelty.}
The closest prior work is Forster, Kirst, and Smolka~\cite{forster2019},
who formalize Rice's theorem in Coq. Their proof follows the classical
Rogers presentation: for each program $e$ they construct a new program $h_e$
that simulates $e$ and then behaves like a positive witness if $e$ halts.
This requires two classical steps: (1) the s-1-1 theorem (via the Recursion
Theorem, a diagonalization) to make $h_e$ computable from $e$; and (2) a
case split on $P(\bot) \vee \neg P(\bot)$ to decide the direction of the
reduction. Their formalization uses the \texttt{Classical} library
throughout.

Our proof avoids both steps structurally. The witness programs $S_D^b$ are
constructed from a Diophantine polynomial and a bit --- no self-reference,
no fixed-point. The $P(\bot)$ split is irrelevant because when $D$ has no
solution, $S_D^0 \equiv S_D^1 \equiv \bot$ and extensionality gives
$P(S_D^0) \Leftrightarrow P(S_D^1)$ with no knowledge of $P(\bot)$.
The Rocq formalization is fully intuitionistic; no classical library is imported.

\paragraph{Constructive audit.}
Table~\ref{tab:axioms} summarises the axiom dependencies of every
theorem in the development. The sole external axiom is
\texttt{H10C\_SAT\_undec}, proved by Larchey-Wendling and
Forster~\cite{larchey2022} in CIC without classical axioms; our
development adds no further reasoning of any kind.
\texttt{Rice\_Theorem} and \texttt{Halting\_Problem} carry zero
classical axioms (\texttt{Print Assumptions} returns \emph{closed
under the global context}).

\paragraph{Structure.}
Section~\ref{sec:prelim} establishes the step-indexed model
(Proposition~\ref{rem:step-indexed} justifies its faithfulness), and defines
observational equivalence and semantic properties.
Section~\ref{sec:mrdp} recalls the MRDP theorem and its constructive status.
Section~\ref{sec:construction} defines the two-witness construction.
Section~\ref{sec:main} proves the main theorem.
Section~\ref{sec:halting} derives the undecidability of the halting problem
as a corollary (one line of Rocq).
Section~\ref{sec:frontier} discusses the decidability frontier for bounded
programs and states an open problem.
Section~\ref{sec:related} surveys related work.
The appendix contains the complete Rocq formalization.

\begin{figure}[h]
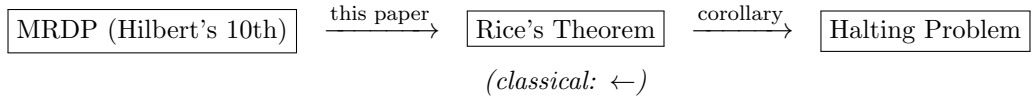

\centering
\renewcommand{\arraystretch}{1.5}
\begin{tabular}{ccccc}
\fbox{\small MRDP (Hilbert's 10th)}
 & $\xrightarrow{\text{\scriptsize this paper}}$
 & \fbox{\small Rice's Theorem}
 & $\xrightarrow{\text{\scriptsize corollary}}$
 & \fbox{\small Halting Problem} \\
 & & {\small \textit{(classical: }$\leftarrow$\textit{)}}
\end{tabular}
\caption{Reduction directions. Classical proofs derive Rice's theorem
\emph{from} the halting problem (left arrow). This paper reduces
\emph{from} MRDP (right arrows), with the halting problem as a corollary.}
\label{fig:reduction}
\end{figure}

\section{Preliminaries}
\label{sec:prelim}

\subsection{Partial Functions and Convergence}

We work in a standard model of partial computable functions. For a program
\(e\) and input \(x\), we write \(\varphi_e(x) \downarrow\) if the computation
converges, and \(\varphi_e(x) \uparrow\) if it diverges. We denote by \(\bot\)
the specific program that diverges on every input, i.e., \(\forall x\,:\,
\varphi_\bot(x) \uparrow\). Note that \(\bot\) here denotes a concrete
program (e.g., the program that runs an infinite loop on every input),
not a logical proposition; we use \(\bot\) exclusively in this sense
throughout.

\subsection{Constructive Equality of Partial Functions}

A central concern in constructive logic is the meaning of equality for partial
functions. We adopt a step-indexed notion suitable for formalization.

\begin{defi}[Observational Equivalence]
\label{def:obs-equiv}
For a step-indexed model where each program is a function \(e : \mathbb{N} \to
(\mathbb{N} \to \text{option } \mathbb{N})\) (input \(x\) then fuel \(k\)), two
programs \(e\) and \(f\) are \emph{observationally equivalent}, written
\(\varphi_e \simeq \varphi_f\), if for every input \(x\) there exists a fuel
bound \(N\) such that for all \(k \ge N\), \(e(x,k) = f(x,k)\).
\end{defi}

This definition is strictly constructive: it does not assert
\(\varphi_e(x)\!\downarrow \vee \varphi_e(x)\!\uparrow\) for any particular
\(x\); it requires eventual agreement in the step-indexed model.

\begin{prop}[Faithfulness of the step-indexed model]
\label{rem:step-indexed}
\emph{This is a paper-level metatheoretic argument connecting the
step-indexed model to classical computability theory. It is not
formalized in \texttt{rice.v} and the main theorems do not depend on
it --- see Remark~\ref{rem:model-scope}.}\smallskip

Let $\varphi_e : \mathbb{N} \rightharpoonup \mathbb{N}$ denote the partial
function computed by program $e$ in the standard sense (Turing machines,
or any equivalent model). Define the step-indexed lifting
$\hat{e}(x, k) = \mathsf{Some}\,v$ if $e$ on input $x$ terminates in
at most $k$ steps with output $v$, and $\mathsf{None}$ otherwise.
Then:
\begin{enumerate}
  \item \emph{(Monotonicity)} $\hat{e}$ is stable in the sense of
        Definition~\ref{def:obs-equiv}: if $\hat{e}(x,k) = \mathsf{Some}\,v$
        then $\hat{e}(x,k') = \mathsf{Some}\,v$ for all $k' \geq k$.
  \item \emph{(Soundness)} If $\hat{e} \simeq \hat{f}$ then
        $\varphi_e = \varphi_f$ as partial functions: for all $x$,
        $\varphi_e(x)\downarrow v$ iff $\varphi_f(x)\downarrow v$.
  \item \emph{(Completeness)} If $\varphi_e = \varphi_f$ then
        $\hat{e} \simeq \hat{f}$.
\end{enumerate}
\end{prop}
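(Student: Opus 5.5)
The plan is to prove the three items in order, each directly from the definition of $\hat{e}$ and Definition~\ref{def:obs-equiv}, using only the determinism of the underlying machine model.

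\textbf{Monotonicity.} Suppose $\hat{e}(x,k) = \mathsf{Some}\,v$. Then the run of $e$ on $x$ halts at some step $t \le k$ with output $v$. For any $k' \ge k$ we also have $t \le k'$. The computation is deterministic, so the run within $k'$ steps is the same run. It therefore halts at the same step $t$ with the same output, giving $\hat{e}(x,k') = \mathsf{Some}\,v$.

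A useful consequence, needed below, is that $\hat{e}(x,\cdot)$ is eventually constant. Either it is eventually $\mathsf{Some}\,v$ for a unique $v$, namely when $\varphi_e(x)\downarrow v$, or it is everywhere $\mathsf{None}$, namely when $\varphi_e(x)\uparrow$. We also record that $\varphi_e(x)\downarrow v$ iff $\exists k.\ \hat{e}(x,k) = \mathsf{Some}\,v$. This is exactly the definition of $\hat{e}$.

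\textbf{Soundness.} Fix $x$ and take $N$ from $\hat{e} \simeq \hat{f}$. Suppose $\varphi_e(x)\downarrow v$, so some $k_0$ has $\hat{e}(x,k_0) = \mathsf{Some}\,v$. Let $k = \max(k_0, N)$. By monotonicity, $\hat{e}(x,k) = \mathsf{Some}\,v$. Since $k \ge N$, observational equivalence gives $\hat{f}(x,k) = \mathsf{Some}\,v$, hence $\varphi_f(x)\downarrow v$. The converse direction is symmetric. This argument is fully constructive.

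\textbf{Completeness.} This is the main obstacle, because we must produce a threshold $N$ for each $x$. Classically we split on $\varphi_e(x)\downarrow$ versus $\varphi_e(x)\uparrow$. If $\varphi_e(x)\downarrow v$ with halting time $t_e$, then $\varphi_f(x)\downarrow v$ with some halting time $t_f$. We take $N = \max(t_e, t_f)$, and monotonicity gives $\hat{e}(x,k) = \hat{f}(x,k) = \mathsf{Some}\,v$ for all $k \ge N$. If $\varphi_e(x)\uparrow$, then $\varphi_f(x)\uparrow$ as well. Both liftings are identically $\mathsf{None}$, so $N = 0$ works.

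This case split is an instance of LEM on halting. Moreover, the threshold $N$ cannot in general be computed from $x$, since doing so would decide halting. I would state this openly: item~3 is a classical metatheorem. That is acceptable precisely because the proposition is flagged as paper-level and unused by the formal development (Remark~\ref{rem:model-scope}).

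If a constructive variant is wanted, I would try the double-negated form instead: $\neg\neg\,\exists N\,\forall k\ge N.\ \hat{e}(x,k)=\hat{f}(x,k)$. This follows intuitionistically from $\neg\neg(P\vee\neg P)$ applied to $P = \exists k.\ \hat{e}(x,k)\ne\mathsf{None}$.
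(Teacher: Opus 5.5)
Your proof is correct and follows the paper's argument item by item: determinism gives monotonicity, the threshold $\max(k_0,N)$ gives soundness, and completeness uses the case split on $\varphi_e(x)\downarrow$ versus $\varphi_e(x)\uparrow$ with $N=0$ or $N=\max(k_e,k_f)$. You also point out that this last split is an instance of excluded middle on halting and that $N$ is not computable in general; that is accurate, and the paper's proof uses the split without saying so.
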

\begin{proof}
(1) Immediate from the definition of $\hat{e}$: termination in $k$ steps
implies termination in $k'\geq k$ steps with the same output.
(2) Suppose $\hat{e} \simeq \hat{f}$ and $\varphi_e(x)\downarrow v$.
Then $\hat{e}(x,k) = \mathsf{Some}\,v$ for some $k$. By
Definition~\ref{def:obs-equiv}, there exists $N$ such that
$\hat{e}(x,k) = \hat{f}(x,k)$ for all $k \geq N$. Taking
$k = \max(k_0, N)$ where $k_0$ is the termination step gives
$\hat{f}(x,k) = \mathsf{Some}\,v$, so $\varphi_f(x)\downarrow v$.
By symmetry, $\varphi_e = \varphi_f$.
(3) Suppose $\varphi_e = \varphi_f$. For any $x$, if $\varphi_e(x)\uparrow$
then $\hat{e}(x,k) = \mathsf{None} = \hat{f}(x,k)$ for all $k$, so
agreement holds trivially with $N=0$. If $\varphi_e(x)\downarrow v$,
let $k_e$ (resp.\ $k_f$) be the number of steps for $e$ (resp.\ $f$)
to terminate on $x$. Then $\hat{e}(x,k) = \mathsf{Some}\,v$ for all
$k \geq k_e$ and $\hat{f}(x,k) = \mathsf{Some}\,v$ for all $k \geq k_f$;
taking $N = \max(k_e, k_f)$ gives agreement.
\end{proof}
\begin{rem}[Scope of the formalization]
\label{rem:model-scope}
Proposition~\ref{rem:step-indexed} is a \emph{metatheoretic} statement
connecting the step-indexed model to classical computability theory.
The Rocq formalization does \emph{not} depend on it: \texttt{Rice\_Theorem}
and \texttt{Halting\_Problem} are proved entirely within the step-indexed
model, with no reference to Turing machines or partial recursive functions.
What the formalization proves is: \emph{within any model of computation
satisfying the step-indexed interface (Definition~\ref{def:obs-equiv}),
no non-trivial semantic property is decidable}.
Proposition~\ref{rem:step-indexed} then confirms that the standard
Turing model instantiates this interface, so the result transfers.
The Rocq formalization takes monotonicity (part~1) as a definition;
parts~2 and~3 are paper-level arguments that would require a concrete
computation model to mechanize fully~\cite{appel2001}.
\end{rem}

\begin{rem}
Observational equivalence is a relation, not a decision procedure. It serves as
a definition of extensional equality in the step-indexed setting. For programs
that converge, it is consistent with Kleene equality: if $\varphi_e(x)
\downarrow v$ and $\varphi_f(x) \downarrow v'$ for all $x$, then
$\varphi_e \simeq \varphi_f$ implies $v = v'$. Divergence is handled by
requiring eventual agreement of the step-indexed outputs rather than
asserting a disjunction on convergence.
\end{rem}

\subsection{Semantic Properties}

\begin{defi}[Semantic Property]
A predicate \(P\) on programs is a \emph{semantic property} (or
\emph{extensional property}) if it respects observational equivalence:
\[
  \forall e, f\,:\; \varphi_e \simeq \varphi_f \;\Rightarrow\;
  \bigl(P(e) \Leftrightarrow P(f)\bigr).
\]
\end{defi}

\begin{defi}[Constructive Non-Triviality]
\label{def:nontrivial}
A semantic property \(P\) is \emph{non-trivial} if there exist explicitly given
programs \(e_0\) and \(e_1\) such that:
\[
  \neg P(e_0) \qquad \text{and} \qquad P(e_1).
\]
In intuitionistic logic, mere existence is insufficient; effective witnesses
must be provided. No totality condition is imposed on the witnesses; the
proof is valid for arbitrary partial programs $e_0$ and $e_1$.
\end{defi}

\begin{rem}
We make no assumption about \(P(\bot)\). In particular, we do not assume
\(P(\bot) \vee \neg P(\bot)\), which would invoke LEM. The entire proof avoids
determining the truth value of \(P(\bot)\).
\end{rem}

Table~\ref{tab:axioms} summarises the axiom dependencies of every theorem
in the development; readers may wish to consult it before the proof.

\begin{table}[h]
\centering
\small
\begin{tabular}{lll}
\toprule
\textbf{Theorem} & \textbf{Axioms} & \textbf{Classical?} \\
\midrule
\texttt{poly\_encode\_correct} & none & no \\
\texttt{LW\_H10c\_undec} & \texttt{H10C\_SAT\_undec} & no \\
\texttt{MRDP\_from\_LW} & \texttt{H10C\_SAT\_undec} & no \\
\texttt{Rice\_Theorem} & closed & no \\
\texttt{Halting\_Problem} & closed & no \\
\midrule
\texttt{H10C\_SAT\_undec} (in LW) & FRACTRAN$\to$Minsky$\to$Halting & no --- proved in CIC \\
\bottomrule
\end{tabular}
\caption{Axiom provenance. ``Closed'' means \texttt{Print Assumptions}
returns \emph{Closed under the global context}.
\texttt{H10C\_SAT\_undec} is proved by Larchey-Wendling and Forster
in CIC without classical axioms~\cite{larchey2022}; the proof encodes
register-machine transitions as Diophantine constraints via an explicit
construction that requires no LEM.}
\label{tab:axioms}
\end{table}

\section{The MRDP Theorem and its Constructive Status}
\label{sec:mrdp}

The MRDP theorem (Davis--Putnam--Robinson--Matiyasevich, 1970) states that every
recursively enumerable set is Diophantine: for any recursively enumerable set
\(S \subseteq \mathbb{N}\), there exists a polynomial \(D_S(\vec{x}, y)\) with
integer coefficients such that:
\[
  y \in S \iff \exists \vec{x} \in \mathbb{N}^k\,:\; D_S(\vec{x}, y) = 0.
\]
The construction of \(D_S\) from a program enumerating \(S\) is algorithmic and
explicit. No proof by contradiction or diagonalization is required; the theorem
reduces the membership problem to polynomial arithmetic through a finite,
computable sequence of steps.

\begin{rem}[Constructive status of MRDP]
The polynomial construction in MRDP uses induction and case analysis
but no LEM on undecidable propositions; the genuinely classical step
is the undecidability argument, not the polynomial encoding.
Ruitenburg~\cite{ruitenburg1985} established the forward direction
(solvability from a halting derivation) in intuitionistic arithmetic
with countable choice.
For our proof, MRDP enters exactly once via the axiom
\texttt{H10C\_SAT\_undec}, proved by Larchey-Wendling and
Forster~\cite{larchey2022} in CIC without classical logic.
\end{rem}

In particular, let \(K = \{e \mid \varphi_e(e)\!\downarrow\}\) be the halting
set. \(K\) is recursively enumerable but not computable. MRDP yields an explicit
polynomial \(D_K(\vec{x}, e)\) such that:
\[
  \varphi_e(e)\!\downarrow \iff \exists \vec{x} \in \mathbb{N}^k\,:\;
  D_K(\vec{x}, e) = 0.
\]
The construction of \(D_K\) from a Turing machine description is
algorithmic: register-machine transitions are encoded as Diophantine
constraints, and their conjunction is folded into a single polynomial
via the sum-of-squares identity (\(c_1 = 0 \wedge c_2 = 0 \iff
c_1^2 + c_2^2 = 0\)). The axiom \texttt{H10C\_SAT\_undec} packages
this result: there is no computable boolean separator for Diophantine
solvability, which is exactly what is needed to drive the contradiction
in Theorem~\ref{thm:rice}.

\begin{cor}[MRDP, applied]
\label{cor:mrdp}
There is no total computable function \(F\) with codomain \(\{0,1\}\) satisfying:
\[
  (\text{$D$ solvable} \Rightarrow F(D) = 1) \quad \text{and} \quad
  (\text{$D$ unsolvable} \Rightarrow F(D) = 0).
\]
\end{cor}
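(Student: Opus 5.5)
The corollary will follow by contraposition from the packaged undecidability axiom \texttt{H10C\_SAT\_undec}. That axiom states the non-existence of a computable boolean separator for Diophantine solvability. I will suppose we are given a total computable $F$ with codomain $\{0,1\}$ satisfying both implications, and derive a contradiction (a proof of negation, which is intuitionistically valid).

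First, I fix the encoding. The polynomials $D$ range over the concrete Diophantine representation used by Larchey-Wendling and Forster: \texttt{H10c} constraint lists. The translation \texttt{poly\_encode\_correct} (fully proved) relates the paper's polynomial form to that representation, preserving solvability in both directions. Composing $F$ with this computable encoding gives a boolean function $F'$ on \texttt{H10c} instances. Converting the codomain $\{0,1\}$ to booleans by testing $F'(c) = 1$ gives $f : \texttt{H10c} \to \texttt{bool}$.

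Second, I check that $f$ is a separator in the sense the axiom forbids.
\begin{itemize}
\item If $c$ is satisfiable, the first implication gives $F(c)=1$, so $f(c)=\texttt{true}$.
\item If $c$ is unsatisfiable, the second implication gives $F(c)=0$, so $f(c)=\texttt{false}$.
\end{itemize}
Both directions are used separately, so no disjunction ``solvable or unsolvable'' is ever asserted.

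Since $f(c)\in\{\texttt{true},\texttt{false}\}$ is a decidable boolean equality, I also get the converse direction without classical reasoning. If $f(c)=\texttt{true}$ and $c$ were unsatisfiable, then $f(c)=\texttt{false}$, a contradiction; this yields only $\neg\neg\mathrm{SAT}(c)$. Hence I expect \texttt{H10C\_SAT\_undec} is phrased as the non-existence of a function satisfying exactly these two implications, which is what \texttt{LW\_H10c\_undec} and \texttt{MRDP\_from\_LW} repackage. I will apply \texttt{MRDP\_from\_LW} directly to $f$ to close the goal.

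The main obstacle is matching the precise form of the axiom. If it demands a full reflection $\mathrm{SAT}(c)\leftrightarrow f(c)=\texttt{true}$, the step from $f(c)=\texttt{true}$ to $\mathrm{SAT}(c)$ would need Markov-style reasoning. My first move is to state \texttt{LW\_H10c\_undec} in the two-implication ``separator'' form, so the corollary is literally an instance of it. The derivation of this form from \texttt{H10C\_SAT\_undec} is then confined to that lemma, which the paper's audit (Table~\ref{tab:axioms}) records as non-classical.
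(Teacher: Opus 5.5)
Your skeleton matches the paper's formal route. You turn the $\{0,1\}$-valued $F$ into a boolean function by testing $=1$ (the paper's \texttt{Z\_sep\_to\_bool\_sep}), pull it back to \texttt{H10c} instances, and contradict \texttt{LW\_H10c\_undec}. The paper does state that lemma in two-implication separator form, to avoid the Markov-style step from $f(c)=\mathtt{true}$ to $\mathrm{SAT}(c)$. The gap is that your pull-back runs in the wrong direction.

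The corollary has to be about $F$ on the paper's own polynomials, since that is the type $S_D^b$ is built from and on which $D\mapsto\delta_D$ is defined in Theorem~\ref{thm:rice}. Letting $D$ range over \texttt{H10c} lists, as your first sentence does, just restates the axiom and disconnects it from its use. With $F$ on polynomials, you get a separator for \texttt{H10c} only by precomposing $F$ with a computable map $g$ \emph{from} constraint lists \emph{to} polynomials. The paper's informal route, $e\mapsto D_K(\cdot,e)$, likewise points \emph{into} the polynomials. \texttt{poly\_encode\_correct} is the encoding of polynomials \emph{into} \texttt{h10c}, so $F$ cannot be composed with it to give a function on \texttt{H10c}; using it would mean inverting the encoding, which the paper explicitly avoids.

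The missing ingredient is the paper's \texttt{h10c\_to\_Poly}. It sends the constraints $x_i=1$, $x_i+x_j=x_k$, $x_ix_j=x_k$ to the polynomials $x_i-1$, $x_i+x_j-x_k$, $x_ix_j-x_k$, and folds the conjunction into one equation by sum of squares. You then prove $\mathrm{SAT}(c)\Leftrightarrow g(c)$ solvable and set $f(c):=[F(g(c))=1]$. The forward direction handles satisfiable $c$. For unsatisfiable $c$, use the backward direction contrapositively together with the intuitionistically valid $\neg\exists\vec x\,(D(\vec x)=0)\Rightarrow\forall\vec x\,(D(\vec x)\neq 0)$.

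Your last step is also misdirected. \texttt{MRDP\_from\_LW} is the theorem that discharges \texttt{Hypothesis MRDP\_undecidable}, i.e.\ it is the Rocq form of this very corollary (no separator exists on \texttt{Poly}). It does not apply to an $f$ on \texttt{H10c}, and it cannot serve as a step in proving itself; what you apply to $f$ is \texttt{LW\_H10c\_undec}. Finally, stating that lemma in separator form does not dissolve the Markov issue you identified. It moves it into the derivation of \texttt{LW\_H10c\_undec} from \texttt{H10C\_SAT\_undec}, which you, like the paper, accept on the strength of the audit table rather than proving.
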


\begin{rem}[Constructive status of Corollary~\ref{cor:mrdp}]
The undecidability of $K$ is a negative statement
($\neg\exists f,\, f$ decides $K$), proved constructively by deriving
$\bot$ from the assumption of a decider ($\neg P \equiv P \to \bot$).
No positive disjunction about halting is asserted.
\end{rem}

\section{The Two-Witness Construction}
\label{sec:construction}

Let \(P\) be a non-trivial semantic property with witnesses \(e_0\) (satisfying
\(\neg P(e_0)\)) and \(e_1\) (satisfying \(P(e_1)\)), as given by
Definition~\ref{def:nontrivial}.

The following example illustrates the key behaviour before the formal definition.

\begin{exa}[Concrete instantiation]
\label{ex:concrete}
Let $P = \mathtt{Terminates}$, $e_0 = \mathtt{diverge}$,
$e_1 = \mathtt{halt}$, and $D(x,y,z) = x^2-y^2-z^2-1$.
The programs $S_D^0$ and $S_D^1$ both enumerate triples
$(x,y,z)\in\mathbb{N}^3$ checking whether $D$ has a solution.
\begin{itemize}
  \item \emph{If a solution exists} (it does: $x=1,y=0,z=0$):
        at step $m$ where a solution is first found,
        $S_D^0$ switches to behaving like $\mathtt{diverge}$ (never halts),
        while $S_D^1$ switches to $\mathtt{halt}$ (always halts).
        Thus $P(S_D^1)=\mathtt{true}$, $P(S_D^0)=\mathtt{false}$,
        and $\delta_D = 1$.
  \item \emph{If no solution existed}: both programs would diverge on every
        input ($\varphi_{S_D^0}\simeq\varphi_{S_D^1}\simeq\varphi_\bot$),
        so any decider $\mathtt{dec}$ would return the same value for both,
        giving $\delta_D = 0$ --- regardless of $P(\bot)$.
\end{itemize}
The key point: no case split on $P(\bot)$ is needed. When $D$ is unsolvable,
both witnesses are $\varphi_\bot$, and their identical behavior forces $\delta_D=0$
without asking whether $\bot$ terminates. The same pattern works for any
non-trivial $P$ and any $D$.
\end{exa}

For any polynomial \(D(\vec{x})\) with integer coefficients and \(k\) variables,
we define two programs \(S_D^0\) and \(S_D^1\) as follows.

\begin{defi}[Witness Programs]
\label{def:witness}
For \(b \in \{0, 1\}\), the program \(S_D^b\) operates on any input \(y\) as
follows:
\begin{enumerate}
  \item Enumerate all tuples \(\vec{x} \in \mathbb{N}^k\) in a canonical
        order (e.g., by increasing \(\ell_1\)-norm, with ties broken
        lexicographically).
  \item For each tuple \(\vec{x}\), compute \(D(\vec{x})\) (a finite integer
        computation).
  \item If \(D(\vec{x}) = 0\) for some tuple reached at step \(n\): on every
        call $S_D^b(y, \mathit{fuel})$ with $\mathit{fuel} \geq n$, return
        $e_b(y, \mathit{fuel})$ directly.  Each call is self-contained;
        $S_D^b$ inspects $\mathit{fuel}$ to determine whether a solution
        would have been found by that step, with no shared mutable state.
  \item If the enumeration continues indefinitely (i.e., no solution is found):
        \(S_D^b\) diverges on every input.
\end{enumerate}
\end{defi}

\begin{rem}
The construction of \(S_D^b\) from \(D\), \(k\), and \(b\) is entirely explicit and
computable. No self-reference, fixed point, or diagonalization is involved.
\end{rem}

\begin{lem}[Behavior when solvable]
\label{lem:solvable}
If \(\exists \vec{x} \in \mathbb{N}^k\,:\; D(\vec{x}) = 0\), then
\(\varphi_{S_D^b} \simeq \varphi_{e_b}\) for each \(b \in \{0,1\}\).
\end{lem}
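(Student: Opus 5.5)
The plan is to unfold Definition~\ref{def:obs-equiv} directly and exhibit the fuel bound explicitly. First I make precise, in the step-indexed model, the program $S_D^b$ of Definition~\ref{def:witness}. On input $y$ and fuel $k$, it checks the first $k$ tuples of the canonical enumeration of $\mathbb{N}^k$; this is a finite, decidable computation, since each $D(\vec{x})$ is an integer and equality to $0$ is decidable. If one of those tuples is a root of $D$, it returns $e_b(y,k)$; otherwise it returns $\mathsf{None}$. Equivalently, writing $\mathrm{found}_D(k)$ for the boolean ``some tuple among the first $k$ enumerated is a root of $D$'', we have
\[
  S_D^b(y,k) = \text{if } \mathrm{found}_D(k) \text{ then } e_b(y,k) \text{ else } \mathsf{None}.
\]

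The key steps, in order, are as follows. (i) From the hypothesis $\exists \vec{x}\,:\,D(\vec{x})=0$, obtain a concrete witness $\vec{x}_0$. This is constructively legitimate, since we are eliminating an existential in the hypothesis of an implication, not proving one. (ii) Compute the index $n$ of $\vec{x}_0$ in the canonical enumeration. The enumeration by $\ell_1$-norm with lexicographic tie-breaking is an explicit bijection, or at least an explicit surjection, onto $\mathbb{N}^k$, so $n$ is computable from $\vec{x}_0$. (iii) Prove that $\mathrm{found}_D$ is monotone: if $\mathrm{found}_D(k)$ holds and $k' \ge k$, then $\mathrm{found}_D(k')$ holds, because the first $k$ tuples are among the first $k'$. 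Hence $\mathrm{found}_D(k) = \mathsf{true}$ for all $k \ge n$. (iv) Take $N = n$, which is independent of $y$, for every input $y$. For every $k \ge N$, step (iii) reduces the conditional to its first branch, giving $S_D^b(y,k) = e_b(y,k)$. This is exactly Definition~\ref{def:obs-equiv}, so $\varphi_{S_D^b} \simeq \varphi_{e_b}$.

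None of these steps requires case analysis on the behaviour of $e_b$, nor any totality of $e_b$. This is the point of passing the unchanged fuel $k$ to $e_b$: agreement holds pointwise and on the nose for $k \ge n$, whether $e_b(y,k)$ is $\mathsf{None}$ or $\mathsf{Some}\,v$. The only case split is on the boolean $\mathrm{found}_D(k)$, which is decidable, so the argument is intuitionistically valid.

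The main obstacle I expect is step (ii), together with the monotonicity in step (iii). I need the enumeration to be a genuine computable function $\mathbb{N} \to \mathbb{N}^k$ whose every tuple appears at an explicitly computable index, and I need $\mathrm{found}_D$ to be defined as a bounded search so that monotonicity is a simple induction on $k' - k$. My first attempt would be to avoid a bespoke $\ell_1$ ordering altogether. I would use an iterated Cantor pairing $\mathbb{N} \to \mathbb{N}^k$ with an explicit inverse, and define $\mathrm{found}_D(k) := \mathrm{existsb}\,(\lambda i.\ D(\mathrm{unpair}(i)) =_{\mathbb{Z}} 0)\,[0,\dots,k]$. Then $n := \mathrm{pair}(\vec{x}_0)$ works immediately, and monotonicity follows from the prefix property of the list $[0,\dots,k]$.
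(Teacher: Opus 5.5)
Your proposal is correct and follows the paper's proof. You extract the root $\vec{x}_0$, locate it at a finite index $n$ of the enumeration, and use the full fuel pass-through to get $S_D^b(y,k) = e_b(y,k)$ for every $k \ge n$, so $N = n$ works uniformly in $y$ with no totality assumption on $e_b$. The monotonicity of the bounded search that you make explicit in step (iii), and your iterated Cantor pairing, correspond to \texttt{find\_sol\_stable} and \texttt{decode\_k} in the formalization, whereas the paper's prose leaves monotonicity implicit in ``after this step.''
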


\begin{proof}
Suppose \(D(\vec{x}_0) = 0\) for some \(\vec{x}_0\). Since the enumeration is
exhaustive and canonical, \(S_D^b\) will encounter \(\vec{x}_0\) at some finite
step \(n\). After this step, \(S_D^b\) passes its full fuel budget to \(e_b\)
unchanged (step~3 of Definition~\ref{def:witness}), so for all fuel \(\geq n\)
we have \(S_D^b(y, \mathtt{fuel}) = e_b(y, \mathtt{fuel})\) exactly.
Observational equivalence \(\varphi_{S_D^b} \simeq \varphi_{e_b}\) then holds
with threshold \(N = n\), with no totality condition required on \(e_b\).
\end{proof}

\begin{lem}[Behavior when unsolvable]
\label{lem:unsolvable}
If \(\forall \vec{x} \in \mathbb{N}^k\,:\; D(\vec{x}) \neq 0\), then
\(\varphi_{S_D^0} \simeq \varphi_\bot\) and \(\varphi_{S_D^1} \simeq \varphi_\bot\).
\end{lem}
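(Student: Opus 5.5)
The plan is to unfold Definition~\ref{def:obs-equiv} and exhibit a uniform threshold, namely $N = 0$. In the step-indexed reading of Definition~\ref{def:witness}, a call $S_D^b(y,\mathit{fuel})$ does the following. It enumerates the first $\mathit{fuel}$ tuples $\vec{x}^{(0)},\dots,\vec{x}^{(\mathit{fuel}-1)}$ in the canonical order and evaluates $D$ on each; this is a finite, decidable computation, since integer equality is decidable. If it finds an index $n \le \mathit{fuel}$ with $D(\vec{x}^{(n)}) = 0$, it returns $e_b(y,\mathit{fuel})$, and otherwise it returns $\mathsf{None}$. Likewise $\bot(y,k) = \mathsf{None}$ for every $y$ and $k$. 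So it suffices to show that for every input $y$ and every fuel $k$, $S_D^b(y,k) = \mathsf{None}$, for both $b=0$ and $b=1$.

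Fix $y$ and $k$. The bounded search performed by $S_D^b(y,k)$ is a finite loop. I would argue by induction on $k$, or equivalently by a lemma about the boolean bounded-search function (an \texttt{existsb}-style check over the first $k$ tuples). The claim is that the search returns \emph{false}, because each tested tuple satisfies $D(\vec{x}^{(i)}) \neq 0$ by the hypothesis $\forall \vec{x}\,:\, D(\vec{x}) \neq 0$. Since the test is a decidable boolean equality, $D(\vec{x}^{(i)}) \neq 0$ forces the boolean \texttt{eqb} to return false. No excluded middle is needed: we only instantiate the universal hypothesis at finitely many concrete tuples. Hence the branch returning $e_b(y,k)$ is never taken, and $S_D^b(y,k) = \mathsf{None} = \bot(y,k)$. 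Taking $N = 0$ then yields $\varphi_{S_D^b} \simeq \varphi_\bot$ for each $b$.

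The step I expect to be the main obstacle is making this bounded-search reading of step~4 of Definition~\ref{def:witness} precise. The phrase ``diverges on every input'' has to mean, in the step-indexed model, ``returns $\mathsf{None}$ at every fuel''. I also need the reflection lemma connecting the boolean search result to the propositional statement $\exists n \le k,\, D(\vec{x}^{(n)}) = 0$. The relevant direction is that a true search result yields such a witness index, which the hypothesis then refutes. I would prove that direction by induction on $k$ over the search, and conclude by contraposition in its constructive form: a \emph{true} result yields a witness, contradicting the hypothesis, and since booleans have decidable equality the result must be \emph{false}.

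As a remark to include, the argument uses no information about $e_0$, $e_1$, or $P$, and it is symmetric in $b$. In particular, both $S_D^0$ and $S_D^1$ are literally equal to $\bot$ pointwise, so by transitivity $\varphi_{S_D^0} \simeq \varphi_{S_D^1}$. This is exactly what the main theorem will use to get $P(S_D^0) \Leftrightarrow P(S_D^1)$ without examining $P(\bot)$.
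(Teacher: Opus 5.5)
Your proposal is correct and follows the paper's route. The paper's proof is the one-line observation that, with no solution, the branch in step~3 of Definition~\ref{def:witness} is never taken, so $S_D^b$ returns $\mathsf{None}$ at every fuel and is therefore observationally equivalent to $\bot$ (in the Rocq development, \texttt{S\_D\_unsolvable\_equiv} is immediate from \texttt{find\_sol\_unsolvable}); your argument spells this out with the explicit threshold $N=0$ and the constructive bounded-search reflection lemma that underlies that formal step.
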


\begin{proof}
If \(D\) has no solution, step 3 is never reached, so \(S_D^b\) diverges on every
input. By Definition~\ref{def:obs-equiv}, \(\varphi_{S_D^b} \simeq \varphi_\bot\).
\end{proof}

\begin{cor}[Joint behavior]
\label{cor:joint}
If \(D\) is solvable: \(\varphi_{S_D^1} \simeq \varphi_{e_1}\) and
\(\varphi_{S_D^0} \simeq \varphi_{e_0}\). If \(D\) is unsolvable:
\(\varphi_{S_D^1} \simeq \varphi_{S_D^0}\) (both equivalent to \(\varphi_\bot\)).
\end{cor}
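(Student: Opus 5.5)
The corollary simply packages Lemmas~\ref{lem:solvable} and~\ref{lem:unsolvable}, so the plan is to prove its two clauses separately, as two implications, without ever splitting on whether $D$ is solvable.

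\textbf{Solvable case.} Assume $\exists \vec{x}\,:\,D(\vec{x})=0$. Instantiating Lemma~\ref{lem:solvable} at $b=1$ gives $\varphi_{S_D^1}\simeq\varphi_{e_1}$, and at $b=0$ gives $\varphi_{S_D^0}\simeq\varphi_{e_0}$. Nothing further is needed for this clause.

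\textbf{Unsolvable case.} Assume $\forall\vec{x}\,:\,D(\vec{x})\neq 0$. Lemma~\ref{lem:unsolvable} gives $\varphi_{S_D^1}\simeq\varphi_\bot$ and $\varphi_{S_D^0}\simeq\varphi_\bot$. To conclude $\varphi_{S_D^1}\simeq\varphi_{S_D^0}$, I need $\simeq$ to be symmetric and transitive. Both are checked directly from Definition~\ref{def:obs-equiv}:
\begin{itemize}
  \item \emph{Symmetry} is immediate, since equality of $e(x,k)$ and $f(x,k)$ is symmetric.
  \item \emph{Transitivity}: fix an input $x$. Take the thresholds $N_1$ (for $S_D^1$ versus $\bot$) and $N_0$ (for $S_D^0$ versus $\bot$) and set $N=\max(N_0,N_1)$. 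For every $k\ge N$ we then have $S_D^1(x,k)=\bot(x,k)=S_D^0(x,k)$.
\end{itemize}

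This argument involves no case analysis on convergence, so it stays intuitionistic. In particular we never need to know what $\bot(x,k)$ is, and nothing about $P(\bot)$ enters.

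The only step with real content is the transitivity lemma, and even that is routine: the witnesses are explicit existentials combined by $\max$. So I expect no genuine obstacle. The one point to watch is that the threshold is chosen per input $x$, in line with the quantifier order $\forall x\,\exists N$ of Definition~\ref{def:obs-equiv}, rather than uniformly over all inputs.
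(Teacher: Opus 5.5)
Your proof is correct and takes the same approach as the paper, which gives no separate proof and treats the corollary as Lemma~\ref{lem:solvable} at $b=0,1$ plus Lemma~\ref{lem:unsolvable} composed through symmetry and transitivity of $\simeq$ (recorded in the formalization as \texttt{obs\_equiv\_sym} and \texttt{obs\_equiv\_trans}). Your per-input choice $N=\max(N_0,N_1)$ is exactly the right transitivity step.
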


\section{The Main Theorem}
\label{sec:main}

Our decider returns values in $\mathbb{Z}$ rather than $\mathbb{N}$,
so that the separation value $\delta_D = \mathrm{Decide}_P(S_D^1) -
\mathrm{Decide}_P(S_D^0)$ is computed over the integers: $\delta_D =
0$ when $D$ is unsolvable and $\delta_D = 1$ when $D$ is solvable,
and these two values are unambiguously distinct regardless of sign.
In the Rocq formalization this choice also avoids natural-number
subtraction truncation ($n - m = 0$ when $n < m$ in \texttt{nat}),
which would otherwise collapse a signed difference to $0$.
The lemma \texttt{Z\_sep\_to\_bool\_sep} (§\ref{app:rocq}) confirms
that any $\mathbb{Z}$-valued separator immediately yields a boolean
one, so the choice of codomain is not restrictive.

In the Rocq development, \texttt{MRDP\_undecidable} appears as a
\texttt{Hypothesis} inside \texttt{Section Rice\_Theorem} --- standard
Rocq idiom meaning that \texttt{Rice\_Theorem} and
\texttt{Halting\_Problem} are universally quantified over it at section
close. \texttt{Theorem MRDP\_from\_LW} (Appendix~\ref{sec:appendix},
§6.3) discharges this hypothesis, so \texttt{Print Assumptions}
returns \emph{Closed under the global context} for both theorems.

\begin{thm}[Rice's Theorem, Constructive Version]
\label{thm:rice}
Let \(P\) be a non-trivial semantic property in the sense of
Definition~\ref{def:nontrivial}. Then there is no program \(\mathrm{Decide}_P\)
that decides \(P\): for every program \(e\), \(\mathrm{Decide}_P(e)\) terminates
and returns \(1\) if \(P(e)\) holds and \(0\) if \(\neg P(e)\) holds.
\end{thm}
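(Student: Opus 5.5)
We argue by contradiction in the intuitionistic sense: assume a total program $\mathrm{Decide}_P$ with $P(e) \Rightarrow \mathrm{Decide}_P(e) = 1$ and $\neg P(e) \Rightarrow \mathrm{Decide}_P(e) = 0$, and derive $\bot$ (as a proposition) by building a separator contradicting Corollary~\ref{cor:mrdp}. The candidate separator is
\[
  F(D) \;=\; \delta_D \;=\; \mathrm{Decide}_P(S_D^1) - \mathrm{Decide}_P(S_D^0),
\]
computed in $\mathbb{Z}$. It is total and computable because the map $D \mapsto S_D^b$ is explicit (Definition~\ref{def:witness} and the remark after it) and $\mathrm{Decide}_P$ is total. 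If the statement of Corollary~\ref{cor:mrdp} requires codomain $\{0,1\}$, we post-compose with the conversion given by \texttt{Z\_sep\_to\_bool\_sep}, e.g.\ $F'(D) = 1$ if $\delta_D = 1$ and $0$ otherwise; this uses only decidable equality on $\mathbb{Z}$.

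First implication: suppose $D$ is solvable. By Lemma~\ref{lem:solvable} (or Corollary~\ref{cor:joint}), $\varphi_{S_D^1} \simeq \varphi_{e_1}$ and $\varphi_{S_D^0} \simeq \varphi_{e_0}$. Since $P$ is semantic and $P(e_1)$ holds, we get $P(S_D^1)$, hence $\mathrm{Decide}_P(S_D^1) = 1$. Likewise $\neg P(e_0)$ transports to $\neg P(S_D^0)$: if $P(S_D^0)$ held, extensionality would give $P(e_0)$, which is refuted. Hence $\mathrm{Decide}_P(S_D^0) = 0$ and $\delta_D = 1$.

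Second implication: suppose $D$ is unsolvable. By Lemma~\ref{lem:unsolvable}, both programs are $\simeq \varphi_\bot$. By symmetry and transitivity of $\simeq$ we get $\varphi_{S_D^1} \simeq \varphi_{S_D^0}$, so extensionality gives $P(S_D^1) \Leftrightarrow P(S_D^0)$. We must now conclude $\mathrm{Decide}_P(S_D^1) = \mathrm{Decide}_P(S_D^0)$ without knowing whether $P(\bot)$ holds. This is the main obstacle. The decider's specification only relates its output to $P$; it does not say the decider itself respects $\simeq$, and a case split $P(S_D^1) \vee \neg P(S_D^1)$ would reintroduce LEM. The trick I would use is that equality of the outputs is decidable, so it is stable under double negation. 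Each output lies in $\{0,1\}$ by totality. Suppose the two outputs differ. Then one equals $1$ and the other $0$, which is a decidable case split on values, not on $P$. If $\mathrm{Decide}_P(S_D^1) = 1$, then $\neg P(S_D^1)$ fails, since otherwise the output would be $0$; so we get $\neg\neg P(S_D^1)$. Transporting along the equivalence gives $\neg\neg P(S_D^0)$, while $\mathrm{Decide}_P(S_D^0) = 0$ would require... here the decider's specification gives only one direction of implication, so I would strengthen the reading of "decides" to the biconditional $\mathrm{Decide}_P(e) = 1 \Leftrightarrow P(e)$. That reading is implicit in the theorem's "returns $1$ if $P(e)$ holds and $0$ if $\neg P(e)$ holds", together with outputs in $\{0,1\}$. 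With it, $\mathrm{Decide}_P(S_D^1) = 1 \Leftrightarrow P(S_D^1) \Leftrightarrow P(S_D^0) \Leftrightarrow \mathrm{Decide}_P(S_D^0) = 1$. Since both values lie in $\{0,1\}$, they are equal, and $\delta_D = 0$.

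Finally, the two implications are exactly the hypotheses of Corollary~\ref{cor:mrdp}, discharged via \texttt{MRDP\_from\_LW}. This yields $\bot$, and therefore $\neg\exists\,\mathrm{Decide}_P$. At no point is a disjunction about the solvability of $D$ or about $P(\bot)$ asserted.
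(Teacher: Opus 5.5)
Your proposal is correct and follows the paper's proof essentially step for step. It uses the same separator $\delta_D = \mathrm{Decide}_P(S_D^1) - \mathrm{Decide}_P(S_D^0)$ over $\mathbb{Z}$, the same two separate implications via Lemmas~\ref{lem:solvable} and~\ref{lem:unsolvable}, and the same appeal to Corollary~\ref{cor:mrdp}, and your post-composition with the test $\delta_D = 1$ mirrors \texttt{Z\_sep\_to\_bool\_sep}. The biconditional reading $\mathrm{Decide}_P(e) = 1 \Leftrightarrow P(e)$ you settle on is exactly what the paper's proof assumes, although your abandoned double-negation route would also have worked under the weaker one-directional specification, because $P(e) \Rightarrow \mathrm{Decide}_P(e) = 1$ already gives $\mathrm{Decide}_P(e) = 0 \Rightarrow \neg P(e)$ by contraposition.
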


\begin{rem}[Totality of the constructed function]
\label{rem:codomain}
Since $\mathrm{Decide}_P$ is assumed to terminate on every input and return
only $0$ or $1$, we have $a_D, b_D \in \{0,1\}$ for all $D$. Consequently
$\delta_D \in \{-1, 0, 1\}$ in general, but Lemmas A and B together
imply $\delta_D \in \{0,1\}$ under the assumption of a correct decider.
Moreover, since $\mathrm{Decide}_P$ terminates on every program, the
constructed function $F := \lambda(ar, D).\, \delta_{ar,D}$ is total,
as required by Corollary~\ref{cor:mrdp}.
\end{rem}

\begin{rem}[Constructive validity of the two implications]
\label{rem:case-analysis}
Lemmas A and B are proved as separate implications; we never assert the
disjunction ``$D$ solvable or unsolvable.'' The contradiction is reached
by constructing a total function satisfying Corollary~\ref{cor:mrdp}.
\end{rem}

\begin{proof}
Assume for contradiction that \(\mathrm{Decide}_P\) exists, with codomain
\(\{0,1\}\), satisfying \(\mathrm{Decide}_P(e) = 1 \Leftrightarrow P(e)\).
Define:
\[
  a_D \;=\; \mathrm{Decide}_P(S_D^1), \qquad
  b_D \;=\; \mathrm{Decide}_P(S_D^0), \qquad
  \delta_D \;=\; a_D - b_D.
\]
Both $a_D$ and $b_D$ are computable from $D$, hence so is $\delta_D$.

\medskip
\noindent\textbf{Lemma A:} If $D$ is solvable, then $\delta_D = 1$.

By Lemma~\ref{lem:solvable} and extensionality, $P(S_D^1) \Leftrightarrow P(e_1)$
and $P(S_D^0) \Leftrightarrow P(e_0)$. Since $P(e_1)$ holds, $a_D = 1$; since
$\neg P(e_0)$ holds, $b_D = 0$. Thus $\delta_D = 1$.

\medskip
\noindent\textbf{Lemma B:} If $D$ is unsolvable, then $\delta_D = 0$.

By Lemma~\ref{lem:unsolvable}, $\varphi_{S_D^1} \simeq \varphi_{S_D^0}$. By
extensionality, $P(S_D^1) \Leftrightarrow P(S_D^0)$, so $a_D = b_D$ and
$\delta_D = 0$. The truth value of $P(\bot)$ is never needed.

\medskip
\noindent\textbf{Contradiction.} Lemma A and Lemma B together state:
\[
  (\text{$D$ solvable} \;\Rightarrow\; \delta_D = 1) \qquad \text{and} \qquad
  (\text{$D$ unsolvable} \;\Rightarrow\; \delta_D = 0).
\]
Both implications hold intuitionistically. The function $D \mapsto \delta_D$
is therefore computable, has codomain $\{0,1\}$ under our assumption, and
satisfies the conditions of Corollary~\ref{cor:mrdp},
which asserts no such function exists. Contradiction.

Therefore no decider for $P$ exists. \qed
\end{proof}

\section{Corollary: The Halting Problem}
\label{sec:halting}

The undecidability of the halting problem follows as an immediate corollary
of Theorem~\ref{thm:rice}. The instantiation requires $\approx$40 lines
(defining \texttt{Terminates}, \texttt{diverge}, \texttt{halt}, and
\texttt{Terminates\_NonTrivial}); once those are in hand, the proof
of Corollary~\ref{cor:halting} is one line of Rocq.
The corollary adds no new mathematical content beyond the instantiation.

\begin{defi}[Stable termination]
A program $e$ \emph{terminates stably on input $x$} if there exist
a fuel level $k$ and a value $v$ such that $e(x, k) = \mathtt{Some}\;v$
and $e(x, k') = \mathtt{Some}\;v$ for all $k' \geq k$.
\end{defi}

\begin{rem}[Stable termination and the classical halting problem]
In the step-indexed model, stable termination and classical halting coincide:
a program halts (stably) on input $x$ if and only if some fuel level returns
\texttt{Some}$\,v$.  The forward direction holds because the model is monotone
(\texttt{find\_sol\_stable}): once a value is returned, all higher fuel
levels return the same value.  The backward direction is immediate from the
definition of stable termination.
Consequently, the undecidability of \texttt{Terminates} established by
Corollary~\ref{cor:halting} implies the undecidability of ordinary halting
on input~$0$.
\end{rem}

\begin{defi}[Terminates]
The property $\mathtt{Terminates}$ is defined by:
$\mathtt{Terminates}(e) \;\Leftrightarrow\; e$ terminates stably on
input $0$.
\end{defi}

$\mathtt{Terminates}$ is a semantic property in the sense of
Definition~\ref{def:nontrivial}: if $e \simeq f$ (observational equivalence)
then $\mathtt{Terminates}(e) \Leftrightarrow \mathtt{Terminates}(f)$.
It is non-trivial, witnessed by:
\[
  e_0 \;=\; \mathtt{diverge} \;:=\; \lambda x.\,\lambda k.\,\mathtt{None}
  \qquad \text{(never terminates)}
\]
\[
  e_1 \;=\; \mathtt{halt} \;:=\; \lambda x.\,\lambda k.\,\mathtt{Some}\;0
  \qquad \text{(terminates stably at all fuel levels)}
\]

\begin{cor}[Halting Problem]
\label{cor:halting}
There is no computable function $\mathrm{Decide}_{\mathrm{Halt}}$ that,
given a program $e$, returns $1$ if $e$ terminates stably on input $0$
and $0$ otherwise.
\end{cor}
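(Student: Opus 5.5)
The plan is to apply Theorem~\ref{thm:rice} with $P = \mathtt{Terminates}$. Suppose for contradiction that $\mathrm{Decide}_{\mathrm{Halt}}$ exists, returning $1$ exactly when $e$ terminates stably on input $0$ and $0$ otherwise. This is precisely a decider for the predicate $\mathtt{Terminates}$ in the sense required by Theorem~\ref{thm:rice}, since it is total with codomain $\{0,1\}$ and satisfies $\mathrm{Decide}(e)=1 \Leftrightarrow \mathtt{Terminates}(e)$. So all that remains is to check the two hypotheses of the theorem: that $\mathtt{Terminates}$ is semantic, and that it is non-trivial with explicit witnesses.

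\textbf{Non-triviality.} Use the witnesses $e_0 = \mathtt{diverge}$ and $e_1 = \mathtt{halt}$ introduced above.
\begin{enumerate}
  \item $\neg\mathtt{Terminates}(\mathtt{diverge})$: any witness $k, v$ would give $\mathtt{None} = \mathtt{Some}\,v$, which is absurd by constructor disjointness.
  \item $\mathtt{Terminates}(\mathtt{halt})$: take $k=0$ and $v=0$; stability holds by reflexivity, since $\mathtt{halt}(0,k') = \mathtt{Some}\,0$ for every $k'$.
\end{enumerate}
Both facts are purely constructive and involve no case split.

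\textbf{Extensionality (the only step needing care).} Assume $e \simeq f$ and $\mathtt{Terminates}(e)$, with witnesses $k$ and $v$. Definition~\ref{def:obs-equiv} at input $0$ supplies a bound $N$ such that $e(0,k'') = f(0,k'')$ for all $k'' \ge N$. I would take $k^\ast = \max(k,N)$ as the termination fuel for $f$. For every $k' \ge k^\ast$ we have $k' \ge N$, so $f(0,k') = e(0,k')$. We also have $k' \ge k$, so $e(0,k') = \mathtt{Some}\,v$ by the stability built into $\mathtt{Terminates}(e)$. Hence $f(0,k') = \mathtt{Some}\,v$ for all $k' \ge k^\ast$, which gives $\mathtt{Terminates}(f)$. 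The converse direction follows by symmetry of $\simeq$.

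The main obstacle I expect is exactly this point. The definition of $\mathtt{Terminates}$ includes stability precisely so that we never need a global monotonicity assumption on $f$: agreement with $e$ is only eventual, so an arbitrary $f$ could differ from $e$ below $N$. Taking the maximum of $k$ and $N$ avoids this issue, and no decidability of $\le$ beyond the standard facts about $\max$ on $\mathbb{N}$ is needed.

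With both hypotheses established, Theorem~\ref{thm:rice} (instantiated with $e_0=\mathtt{diverge}$ and $e_1=\mathtt{halt}$) refutes the existence of $\mathrm{Decide}_{\mathrm{Halt}}$. This introduces no new axioms beyond those of Theorem~\ref{thm:rice}, so the corollary inherits its constructive status.
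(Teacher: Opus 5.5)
Your proposal is correct and is the paper's proof: instantiate Theorem~\ref{thm:rice} at $\mathtt{Terminates}$ with $e_0=\mathtt{diverge}$ and $e_1=\mathtt{halt}$. The paper only asserts that $\mathtt{Terminates}$ is semantic and non-trivial, whereas you verify both, and your $\max(k,N)$ argument is exactly why stability is built into the definition; one constructive nit is that rephrasing the hypothesis as ``returns $1$ exactly when'' assumes the biconditional, and its direction $\mathrm{Decide}(e)=1\Rightarrow\mathtt{Terminates}(e)$ does not follow from the literal ``$1$ if terminating, $0$ otherwise'', but this costs nothing because Theorem~\ref{thm:rice} is stated in that separator form and applies verbatim.
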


\begin{proof}
$\mathtt{Terminates}$ is a non-trivial semantic property with witnesses
$e_0 = \mathtt{diverge}$ and $e_1 = \mathtt{halt}$. Apply
Theorem~\ref{thm:rice}. \qed
\end{proof}

\begin{rem}[Historical context and novelty]
Turing's 1936 proof of the halting problem's undecidability uses
diagonalization: it constructs a program that halts if and only if
a hypothetical decider says it does not, deriving a contradiction
by self-application. That argument invokes excluded middle (the
self-referential program either halts or it does not) and is
therefore not valid in intuitionistic logic.

Existing constructive proofs of the halting problem in Rocq (e.g.,
Ciaffaglione~\cite{ciaffaglione2004}, and the
\texttt{coq-library-undecidability}~\cite{larchey2022}) take the
halting problem as a \emph{seed}: they prove it undecidable first
(typically by diagonalization over a Turing machine model) and then
derive all other undecidability results by reduction from it.

Corollary~\ref{cor:halting} reverses this direction. The halting
problem is derived as a \emph{consequence} of Theorem~\ref{thm:rice},
which is itself derived from the MRDP theorem. The undecidability of
halting flows from the intractability of Diophantine arithmetic rather
than from a self-referential diagonal argument. The instantiation of Rice's theorem to
\texttt{Terminates} requires $\approx$40 lines (defining
\texttt{Terminates}, \texttt{diverge}, \texttt{halt}, and
\texttt{Terminates\_NonTrivial}).
The proof of Corollary~\ref{cor:halting} is then one line:
\begin{center}
\texttt{exact (Rice\_Theorem Terminates Terminates\_NonTrivial).}
\end{center}
\end{rem}

\paragraph*{Constructive checklist.}
The axiom provenance of every theorem is given in Table~\ref{tab:axioms}
(p.~\pageref{tab:axioms}). The sole external axiom is
\texttt{H10C\_SAT\_undec}, proved by Larchey-Wendling and
Forster~\cite{larchey2022} in CIC without classical axioms; our
development adds no further reasoning of any kind.

Throughout this paper the mathematical argument is fully intuitionistic.
The sole external axiom, \texttt{H10C\_SAT\_undec}, is itself proved in
CIC without classical axioms by Larchey-Wendling and
Forster~\cite{larchey2022}, consistent with the constructive
analyses of Ruitenburg~\cite{ruitenburg1985} and Cornaros--Esbelin~\cite{cornaros1998}.

\begin{enumerate}[label=(\roman*)]
  \item \textbf{LEM on $P(\bot)$.} Never determined. Lemma B uses only
        $P(S_D^1) \Leftrightarrow P(S_D^0)$.
  \item \textbf{Diagonalization.} $S_D^b$ is defined by explicit exhaustive
        search. No self-reference or fixed-point theorem used.
  \item \textbf{LEM on solvability.} Structured as two separate implications,
        not a disjunction. $\delta_D$ makes no internal decision about
        solvability.
  \item \textbf{Double-negation elimination.} Not used. The contradiction
        in the main proof is the direct application of the MRDP
        undecidability axiom to the explicitly constructed function
        $D \mapsto \delta_D$; no double-negation elimination is required.
  \item \textbf{Classical reasoning.} Not used.  \texttt{Require Import
        Classical} does not appear anywhere. No \texttt{classic} or
        \texttt{excluded\_middle} invocation appears in the development.
        The \texttt{LW\_H10c\_undec} axiom is stated in separator form
        (\texttt{forall cs, H10C\_SAT cs -> f cs = true}) rather than
        the $\leftrightarrow$ form, so no classical step is needed to
        convert between the two. The key supporting fact is
        \texttt{Lemma~not\_solvable\_iff\_unsolvable}: for
        \texttt{check\_solution} (a \texttt{Fixpoint} returning \texttt{bool}),
        $\neg\,\mathtt{is\_solvable} \Leftrightarrow \mathtt{is\_unsolvable}$
        holds by case analysis on the boolean, with zero axioms.
  \item \textbf{Axiom of Choice.} Enumeration of $\mathbb{N}^k$ is canonical
        and explicit.
\end{enumerate}

\section{The Decidability Frontier for Bounded Programs}
\label{sec:frontier}

The undecidability established by Theorem~\ref{thm:rice} is an asymptotic
result: it concerns programs with no bound on computation time or state space.
In practice, programs are often analyzed under explicit resource bounds, which
restores decidability. For programs bounded to terminate within $T$ steps over
state spaces of size $\leq N$, semantic properties become decidable by
exhaustive enumeration --- the foundation of bounded model checking.

Our reduction induces a natural \emph{degree hierarchy} on Rice's theorem
via the degree of the Diophantine witness polynomial.

\begin{defi}[Rice degree]
\label{def:rice-degree}
Let $\mathsf{Rice}(d)$ denote the class of non-trivial semantic properties
separable by a two-witness construction (Definition~\ref{def:witness})
whose Diophantine witness polynomial has degree $\leq d$.
Write $\mathsf{Rice}(\infty)$ for the unrestricted class.
The degree measures the minimal Diophantine complexity required to
witness the undecidability of $P$.
\end{defi}

\begin{cor}[Degree-4 saturation]
\label{cor:rice4}
$\mathsf{Rice}(4) = \mathsf{Rice}(\infty)$.
\end{cor}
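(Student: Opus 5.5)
The plan is to establish the two inclusions separately. $\mathsf{Rice}(4) \subseteq \mathsf{Rice}(\infty)$ holds by definition: a separating construction whose witness polynomial has degree $\leq 4$ is in particular an unrestricted one. All the work is in the reverse inclusion. Fix $P \in \mathsf{Rice}(\infty)$, with witnesses $e_0, e_1$ and some Diophantine family $D$ of arbitrary degree driving the construction of Definition~\ref{def:witness}. From $D$ I would build, effectively and uniformly, a polynomial $D'$ of degree $\leq 4$, possibly in more variables. The property I need is an intuitionistic equivalence: $D$ is solvable iff $D'$ is solvable, where each direction carries an explicit map between solutions.

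Given such a $D'$, I would form $S_{D'}^0$ and $S_{D'}^1$ and rerun the proof of Theorem~\ref{thm:rice} unchanged. Lemmas~\ref{lem:solvable} and~\ref{lem:unsolvable} never refer to the degree of the polynomial, so Corollary~\ref{cor:joint} holds verbatim for $D'$. A decider for $P$ would then give a total separator $D' \mapsto \delta_{D'}$ for degree-$\leq 4$ instances. Composing it with the computable map $D \mapsto D'$ gives a separator for all Diophantine instances, contradicting Corollary~\ref{cor:mrdp}. Hence $P \in \mathsf{Rice}(4)$.

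The degree reduction is Skolem's classical trick. First, for each monomial subterm of degree $\geq 2$, introduce fresh variables $u_i$ with defining equations of degree $\leq 2$, for example $u_i - x_j x_k = 0$ or $u_i - u_j x_k = 0$. These equations flatten $D = 0$ into a system of equations $q_1 = 0, \dots, q_m = 0$, each of degree $\leq 2$. Next, collapse the system with the sum-of-squares identity already used in Section~\ref{sec:mrdp}: $D' := \sum_i q_i^2$, which has degree $\leq 4$. The two directions of the solution correspondence are constructive. Forward, extend a solution $\vec{x}$ of $D$ by evaluating the auxiliary subterms. Backward, project a solution of $D'$ onto the original variables. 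Over the integers, $\sum q_i^2 = 0$ implies every $q_i = 0$ by a decidable argument, so no LEM is needed there. Unsolvability transfers contrapositively through these explicit maps, so it also transfers without LEM.

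I expect the main obstacle to be bookkeeping rather than mathematics. The construction must be uniform and computable in $D$. The variable count changes, so $k$ must be part of the encoding fed to the separator. And in the Rocq setting one must check that the \texttt{H10C\_SAT} encoding already has this low-degree shape. In fact, \texttt{H10C} constraints have the forms $x=1$, $x+y=z$, and $x\cdot y=z$, each of degree $\leq 2$, so their sum of squares has degree $\leq 4$. The most economical route is therefore to observe that the axiom \texttt{H10C\_SAT\_undec} is already stated for degree-$\leq 4$ instances (degree $\leq 2$ per constraint) and that the existing proof of Theorem~\ref{thm:rice} only ever uses such instances. This gives $\mathsf{Rice}(\infty) \subseteq \mathsf{Rice}(4)$ directly, and I would try this route first.
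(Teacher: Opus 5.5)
Your proposal is correct, but it proves the reverse inclusion by a different route from the paper. The paper cites Matiyasevich's degree-$\leq 4$ form of MRDP and takes the witness polynomials to be the instances $D_K(\cdot,e)$ of a degree-$\leq 4$ representation of the halting set $K$. You instead build the degree reduction yourself as a computable map $D \mapsto D'$: Skolem flattening into equations of degree $\leq 2$, then the sum-of-squares collapse, with explicit solution maps in both directions. You then close the argument against Corollary~\ref{cor:mrdp} directly. Your route buys self-containment and visible constructivity. The only facts it uses are the decidable zero test for $\sum q_i^2$ and contraposition to transfer unsolvability, and it needs nothing about $K$. The paper's choice of the $K$-polynomial, by contrast, must also invoke either the undecidability of $K$ or a reduction from Diophantine solvability back to membership in $K$. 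That is a slightly awkward dependence for a paper that presents halting as a corollary rather than an input. Your observation about \texttt{H10C} is also useful: its constraints ($x=1$, $x+y=z$, $x\cdot y=z$) already have degree $\leq 2$, so the polynomials produced by the sum-of-squares bridge have degree $\leq 4$. This shows the formal anchor is already a degree-$4$ family. It also suggests a cheap mechanization, needing only a degree function on \texttt{Poly} and a degree bound for the bridge's output, which the paper leaves as future work. The paper's version buys brevity. Mathematically the two coincide, since Matiyasevich's degree-$4$ bound is itself obtained by the Skolem trick you describe.
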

\begin{proof}
Matiyasevich's MRDP encoding~\cite{matiyasevich1970} represents every r.e.\
set as the solution set of a degree-$\leq 4$ polynomial. The witness
polynomial in the two-witness construction for any $P$ can therefore be
taken as the MRDP polynomial for $K = \{n \mid \phi_n(n){\downarrow}\}$,
which has degree $\leq 4$.
\end{proof}

\begin{rem}[Mechanization status]
Corollary~\ref{cor:rice4} is proved mathematically above but is not yet
mechanized in \texttt{rice.v}. The Rocq formalization establishes
$\mathsf{Rice}(\infty)$ without the degree bound; the saturation result
could be added as a Rocq corollary in a future revision.
\end{rem}

We therefore have
$\mathsf{Rice}(2) \subseteq \mathsf{Rice}(3) \subseteq
\mathsf{Rice}(4) = \mathsf{Rice}(\infty)$,
with the left inclusions open.
Whether $\mathsf{Rice}(3) = \mathsf{Rice}(4)$ depends on whether
every r.e.\ set has a degree-$3$ Diophantine representation,
an open problem in Diophantine complexity~\cite{matiyasevich1993}.%
\footnote{An unreviewed preprint~\cite{rosko2025} claims a positive
answer (degree-$3$ suffices over $\mathbb{N}$); this should be treated
as open pending peer review. If confirmed, Corollary~\ref{cor:rice4}
would sharpen to $\mathsf{Rice}(3) = \mathsf{Rice}(\infty)$.}

To see that the left inclusions are not trivially collapsed by a change
of witnesses, consider $\mathsf{Rice}(2)$: the class of properties
witnessable by a degree-$\leq 2$ Diophantine polynomial. The equation
$D(x,y) = x^2 + y^2 - n = 0$ has degree $2$, and its solvability
(whether $n$ is a sum of two squares) is decidable --- so no
undecidability argument can be anchored there. A witness polynomial for
an undecidable property must encode a Turing-complete computation, which
known MRDP encodings achieve only at degree $\geq 4$. Whether degree $3$
suffices, and whether degree $2$ can ever encode an undecidable property,
are the core open questions behind the left inclusions.

Two further directions are noted briefly:
the $S_D^b$ construction is agnostic to the coefficient ring, so a
positive answer to rational MRDP~\cite{mazur1992,koenigsmann2016}
would immediately transfer to a rational-coefficient Rice theorem
with no modification to the proof;
and merging the bridge lemma \texttt{LW\_H10c\_undec} into
\texttt{coq-library-undecidability} as a PR would make Rice's theorem
and the Halting Problem available as standard library lemmas.

\section{Related Work}
\label{sec:related}

\paragraph{Classical proofs of Rice's theorem.}
The original proof by Rice~\cite{rice1953} proceeds by reduction from the
halting problem, using a fixed-point argument (an application of the
Recursion Theorem) to construct a program whose behaviour depends on the
truth value of $P(\bot)$. This invokes LEM twice: once implicitly through
the Recursion Theorem's diagonalization, and once through the case split
$P(\bot) \vee \neg P(\bot)$. Rogers~\cite{rogers1967} presents a cleaner
formulation that makes the two uses of LEM explicit and is the version most
commonly found in textbooks. Both proofs are essentially classical and do
not transfer directly to intuitionistic logic.

\paragraph{Constructive computability theory and realizability.}
The general programme of constructivising classical computability results
is surveyed by Beeson~\cite{beeson1985} and treated systematically by
Troelstra and van Dalen~\cite{troelstra1988}. Within intuitionistic
mathematics, undecidability statements split into two strengths:
$\neg(\exists e,\, \text{Decide}_P = e)$ (our form: no decider exists)
versus the stronger $\forall e,\, \neg(\text{Decide}_P = e)$
(every program fails to decide $P$). Our proof establishes the former.

From the realizability perspective~\cite{bauer2006}, our result holds
in \emph{any} partial combinatory algebra: the two-witness construction
is entirely explicit (no oracle, no fixed-point combinator), so it
realises the undecidability statement directly.  In the effective topos
\cite{hyland1982}, where all functions are computable, Rice's theorem
becomes an internal theorem with no metatheoretic overhead.

The \emph{synthetic} approach to computability, developed by
Forster et al.~\cite{forster2019} in Coq and extended by
Ciaffaglione et al.~\cite{ciaffaglione2004}, axiomatizes computability
internally to type theory.  Our proof is compatible with this approach:
replacing our step-indexed programs with any synthetic notion of
computability satisfying observational extensionality would yield the
same result, since we use no property of the model beyond
Definition~\ref{def:obs-equiv}.

Our proof falls in the weakest constructive category: valid in
intuitionistic predicate logic with the metatheoretic assumptions on
MRDP stated in Section~\ref{sec:mrdp}, and no further classical principles.

\paragraph{The MRDP theorem and constructive arithmetic.}
The Davis--Putnam--Robinson--Matiyasevich theorem~\cite{davis1961,matiyasevich1970}
is classical in its original formulation. Its constructive status was first
analysed by Ruitenburg~\cite{ruitenburg1985}, who showed it holds in
intuitionistic arithmetic with the axiom of choice for numbers. Cornaros
and Esbelin~\cite{cornaros1998} provide a detailed treatment of the
constructive content of Hilbert's Tenth Problem, establishing the precise
fragment of intuitionistic arithmetic required. A full mechanization of
MRDP in Rocq has been given by Larchey-Wendling and
Forster~\cite{larchey2022}, confirming its constructive validity in Coq's
type theory. Our proof takes \texttt{MRDP\_undecidable} as an axiom that
acts as a pointer to that mechanization; it does not depend on the details
of the constructivization beyond the existence of the many-one reduction
from $K$ to Diophantine solvability.
Crucially, \texttt{LW\_H10c\_undec} is a \emph{negative proposition}
($\neg\,\exists f, \ldots$); its proof in CIC proceeds by
intuitionistically valid reductio (deriving $\bot$ from the assumed
decider) without invoking LEM on any Diophantine instance.
This is the precise sense in which the axiom is compatible with
the purely intuitionistic metatheory of the rest of the development.

\paragraph{Step-indexed models and observational equivalence.}
Step-indexed models of computation were introduced by Appel and
McAllester~\cite{appel2001} in the context of type-theoretic reasoning about
recursive types, and have since become a standard tool for mechanized
semantics. Our use of the step-indexed model is more elementary: we require
only that programs are functions $\mathbb{N} \to (\mathbb{N} \to
\text{option}\,\mathbb{N})$ and that convergence is monotone in fuel. The
observational equivalence of Definition~\ref{def:obs-equiv} is a
fuel-indexed variant of the standard notion used in Pitts~\cite{pitts2000}
and related work on operational semantics.

\paragraph{Mechanized proofs of undecidability and comparison with Forster et al.}
The \texttt{coq-library-undecidability}~\cite{forster2019,larchey2022},
maintained on GitHub,%
\footnote{\url{https://github.com/uds-psl/coq-library-undecidability}}
contains a full Rocq mechanization of the MRDP theorem by Larchey-Wendling
and Forster~\cite{larchey2022}. That development uses Minsky machines as
the computation model and Conway's FRACTRAN language as an intermediate
layer. It is built on the synthetic computability framework of
Forster~\cite{forster2022synthetic}, in which \texttt{undecidable}~$P$
denotes $\mathtt{decidable}~P \to \mathtt{enumerable}(\overline{\mathtt{SBTM\_HALT}})$
rather than $\neg\,\mathtt{decidable}~P$; this is the reason the bridge proof
requires no classical axioms.
Our \texttt{MRDP\_undecidable} hypothesis is precisely the
statement they prove; our axiom is therefore a pointer to their library
rather than a genuine assumption. Connecting the two formalizations
directly would require bridging our step-indexed computation model with
their Minsky machine model, which we leave as future work.

Forster, Kirst, and Smolka~\cite{forster2019} also formalize Rice's
theorem in Coq as part of that library. Their proof follows
the classical Rogers presentation~\cite{rogers1967}: given a non-trivial
property $P$ with witness $f_1$ satisfying $P$, they construct for each
program $e$ a new program $h_e$ defined by: simulate $e$ on some canonical
input; if $e$ halts, behave like $f_1$; otherwise behave like $f_0$
(or the reverse, depending on whether $P(\bot)$ holds). This construction
requires two classical steps. \emph{First}, $h_e$ must ``know'' its own
index in order to refer to $e$; this is implemented via the s-1-1 theorem,
which relies on the Recursion Theorem and therefore on a diagonalization
argument. \emph{Second}, the direction of the reduction depends on whether
$P(\bot)$ holds, so their proof performs an explicit case split on
$P(\bot) \vee \neg P(\bot)$, invoking LEM. Their formalization uses the
\texttt{Classical} library throughout.

Our proof eliminates both steps. The witness programs $S_D^b$ are constructed
from a Diophantine polynomial $D$ and a bit $b$ alone --- no self-reference,
no reference to $e$, no fixed-point argument. The $P(\bot)$ case split
is avoided by design: when $D$ is unsolvable, $S_D^0$ and $S_D^1$ are
\emph{observationally identical} (both diverge everywhere), so
$P(S_D^0) \Leftrightarrow P(S_D^1)$ follows by extensionality with no
knowledge of $P(\bot)$. The axiom audit is given in Table~\ref{tab:axioms}.
Thus, while their proof is classical, ours remains intuitionistic throughout.

\paragraph{Bounded decidability and model checking.}
The connection between bounded program analysis and decidability has been
studied extensively in the model checking literature. Clarke et
al.~\cite{clarke2018} survey bounded model checking and its completeness
thresholds. The relationship between the complexity of semantic properties
and the parameters $(T, N)$ of the bounded model (termination steps and
state space size) is closely related to questions in descriptive complexity
studied by Immerman~\cite{immerman1999}. The open problem stated in
Section~\ref{sec:frontier} --- characterizing the decidability frontier as a
function of $(T, N)$ --- appears to be new.
The precise position of Rice's theorem in the Weihrauch
lattice~\cite{brattka2021} is a further open question; $\mathsf{C}_\mathbb{N}$
(closed choice on $\mathbb{N}$) is a natural candidate but we do not
establish either direction of the equivalence here.

\section{Conclusion}\label{sec:conclusion}

We have presented a constructive proof of Rice's theorem via MRDP that:
\begin{enumerate}
  \item uses Hilbert's Tenth Problem as its undecidability anchor;
  \item avoids LEM, diagonalization, and any assumption about $P(\bot)$;
  \item relies on a two-witness construction where $\delta_D = a_D - b_D$
        encodes solvability without ever evaluating $P(\bot)$;
  \item is structured as two implications each valid in intuitionistic logic;
  \item imposes no totality condition on the witnesses, covering the full
        classical statement of Rice's theorem for arbitrary partial programs.
\end{enumerate}

A Rocq formalization is provided in the appendix confirming the constructive
validity of the argument in a purely intuitionistic setting, with the
undecidability of Hilbert's Tenth Problem as the only axiom.

The constructiveness of the proof has a practical consequence. The
$S_D^b$ witness programs are not abstract objects --- they can be written
down. The file \texttt{undecidable.c}, available at
\url{https://github.com/endrazine/rice-constructive},
is a single C program with two modes. The first mode searches for solutions to
$x^2 - y^2 - z^2 = 1$ (solvable; terminates provably). The second mode
searches sequentially, for $k = 0, 1, 2, \ldots$ up to $2^{64}-1$
(stored as \texttt{uint64\_t}, platform-independent; see
Remark~\ref{rem:correspondence} for the rationale) (skipping $k \equiv 4$
or $5 \pmod{9}$, which are provably impossible), for integer solutions to
$x^3 + y^3 + z^3 = k$, advancing to $k+1$ as soon as a solution for the
current $k$ is found. This program terminates if and only if every integer
not congruent to $4$ or $5$ modulo $9$ is a sum of three integer cubes ---
an open conjecture in number theory as of 2026. Unlike a search for a
single fixed target, this formulation is permanently open: resolving any
individual $k$ merely advances the search, and the termination question
is equivalent to the full conjecture.
In both cases, Theorem~\ref{thm:rice} implies that no static analyzer
can correctly decide termination across the full family of such programs:
any analyzer that could would decide arbitrary Diophantine solvability,
which is impossible by MRDP.

\begin{rem}[Correspondence to $S_D^b$]
\label{rem:correspondence}
The structure of \texttt{undecidable.c} mirrors $S_D^b$ directly.
Both programs share a generic search engine \texttt{find\_sol\_one()}
that enumerates tuples $\vec{x} \in \mathbb{N}^k$ via Cantor decoding
(steps 1--2 of Definition~\ref{def:witness}), evaluates the polynomial
check \texttt{D(x)==0}, and on success calls the target behavior
(step~3). The \texttt{decidable} program is a ground instance with
$b=1$, $e_1 = \mathtt{halt}$, and the polynomial
$D(x,y,z) = x^2-y^2-z^2-1$ from Example~\ref{ex:concrete}.
The \texttt{undecidable} program composes \texttt{find\_sol\_one()}
instances sequentially over $k = 0,1,2,\ldots$, yielding a program
whose termination is equivalent to a conjecture in number theory rather
than to any single Diophantine instance; resolving any individual $k$
merely advances the search. The outer loop counter \texttt{k} is a
\texttt{uint64\_t} (platform-independent, max $2^{64}-1$); this is
intentional --- the program stalls at the first open value of the
conjecture and never approaches $2^{64}-1$ in practice, so overflow
is purely hypothetical.
The inner tuple counter \texttt{n} inside \texttt{find\_sol\_one()} uses
GMP \texttt{mpz\_t} as a correctness requirement: a fixed-width counter
would exhaust in finite time and invalidate the formal claim.
\texttt{undecidable.c} is a hand-written illustration, not a formally
extracted program. Formal extraction from the Rocq development
(e.g., via \texttt{Extraction} to OCaml, then compilation) is feasible
and is left for a companion artefact.
\end{rem}

\paragraph*{Formalization design decisions.}
Five choices were critical to making the intuitionistic proof work in
Rocq: (1)~full fuel pass-through in \texttt{S\_D}, eliminating the
totality condition on witnesses; (2)~a $\mathbb{Z}$-valued decider,
avoiding \texttt{nat} subtraction truncation; (3)~\texttt{not\_solvable\_iff\_unsolvable},
proved by \texttt{destruct} on the boolean with zero axioms;
(4)~splitting $\mathbb{Z}$-coefficient polynomials into positive and
negative \texttt{nat} parts for the arithmetic encoding; and (5)~the
bridge via \texttt{h10c\_to\_Poly} composing separators rather than
inverting the encoding. Each is discussed at its point of use in
Appendix~\ref{sec:appendix}.

\section*{Acknowledgements}

The author thanks Claude (Anthropic) for assistance with Rocq proof
drafting and structural feedback on earlier versions of this manuscript.

\section{Formalization in Rocq}
\label{app:rocq}

The following excerpt shows the key axiom, theorems, and their Rocq statements.
The complete self-contained development (1919 lines, zero admitted goals, zero
classical axioms) is available at:
\begin{center}
\url{https://github.com/endrazine/rice-constructive}
\end{center}
The formalization uses a step-indexed model of computation. Diophantine polynomials of arbitrary arity are handled via a
$k$-ary Cantor pairing function encoded as \texttt{decode\_k}; the arity
is threaded as the parameter \texttt{ar}. The decider output is represented
using \texttt{Z} (see §\ref{sec:main}). The key constructive lemma is
\texttt{not\_solvable\_iff\_unsolvable}: for the computable
\texttt{bool}-returning \texttt{check\_solution}, the equivalence
$\neg\,\mathtt{is\_solvable} \Leftrightarrow \mathtt{is\_unsolvable}$
holds with zero axioms by case analysis on the boolean.
The sole external axiom is \texttt{H10C\_SAT\_undec};
\texttt{LW\_H10c\_undec} is proved from it with zero admitted goals.
The axiom provenance of every theorem is summarised in Table~\ref{tab:axioms}.

The key design choice eliminating all classical reasoning is that the witness
programs $S_D^b$ pass their full fuel budget to $e_b$ unchanged
(Definition of \texttt{S\_D}). When the Diophantine solution is found at
step $m$, for all fuel $\geq m$ we have $S_D^b(x, \text{fuel}) = e_b(x,
\text{fuel})$ exactly, so observational equivalence holds with $N = m$ and
no fuel arithmetic is needed. This eliminates the fuel-offset problem
that previously required a totality condition on the witnesses, making the
theorem valid for arbitrary partial programs $e_0$ and $e_1$.

\begin{table}[h]
\centering
\caption{Correspondence between Paper Concepts and Rocq Definitions}
\label{tab:correspondence}

\end{table}

\appendix

\section{Complete Rocq Formalization}
\label{sec:appendix}

The following is the complete, self-contained Rocq formalization. It compiles with zero errors and zero admitted goals under Rocq (Coq) 8.18 or later, producing:
\begin{verbatim}
Print Assumptions LW_H10c_undec.   (* H10C_SAT_undec *)
Print Assumptions Rice_Theorem.    (* Closed under the global context *)
Print Assumptions Halting_Problem. (* Closed under the global context *)
\end{verbatim}

%

\subsection{Formalization Notes}

The formalization proves the following in order:
\begin{enumerate}
  \item Observational equivalence is an equivalence relation
        (\texttt{obs\_equiv\_refl}, \texttt{obs\_equiv\_sym},
        \texttt{obs\_equiv\_trans}).
  \item \texttt{find\_sol ar} is monotone: once a solution is found at fuel
        level $k$, it is retained at all higher fuel levels
        (\texttt{find\_sol\_step}, \texttt{find\_sol\_stable}).
  \item The witness programs satisfy the required equivalence properties.
        \texttt{S\_D\_solvable\_equiv} is proved directly: for fuel $\geq m$,
        $S_D^b(x, \mathit{fuel}) = e_b(x, \mathit{fuel})$ by a single rewrite
        on \texttt{find\_sol\_stable}. No fuel arithmetic, no totality
        condition, no classical reasoning.
        \texttt{S\_D\_unsolvable\_equiv} is immediate from
        \texttt{find\_sol\_unsolvable}.
  \item Under the assumption of a decider for a non-trivial semantic property,
        the function $(ar, D) \mapsto \delta_D$ constitutes a decision procedure
        for Diophantine solvability at every arity, contradicting
        \texttt{MRDP\_undecidable}.
\end{enumerate}

\paragraph{The sum-of-squares conjunction encoding.}
The bridge from multivariate polynomials to H10C constraints
requires encoding a conjunction of Diophantine conditions
$c_1 = 0 \wedge c_2 = 0 \wedge \cdots \wedge c_n = 0$
as a single equation. The construction uses the identity
$c_1^2 + c_2^2 + \cdots + c_n^2 = 0 \iff
c_1 = 0 \wedge \cdots \wedge c_n = 0$ (over $\mathbb{N}$,
since all squares are non-negative). This is
constructively valid: $\sum c_i^2 = 0$ implies each $c_i = 0$
by a simple case analysis on natural number arithmetic,
with no classical disjunction needed. The Rocq lemma
\texttt{conj\_encode\_correct} (lines 1264--1363 of the listing)
formalizes this encoding.

\paragraph{The role of \texttt{Hypothesis MRDP\_undecidable}.}
In the Rocq listing, \texttt{MRDP\_undecidable} appears as a
\texttt{Hypothesis} inside \texttt{Section Rice\_Theorem}.
This is standard Coq idiom: it makes \texttt{Rice\_Theorem} and
\texttt{Halting\_Problem} universally quantified over the assumption,
not unconditionally proved.  The hypothesis is discharged by
\texttt{Theorem MRDP\_from\_LW} (Section~6.3 of the listing),
which derives it from the single \texttt{Axiom LW\_H10c\_undec}.
At the end of \texttt{Section Rice\_Theorem}, \texttt{Rice\_Theorem}
and \texttt{Halting\_Problem} therefore depend on exactly one external
assumption: \texttt{LW\_H10c\_undec}.

\texttt{LW\_H10c\_undec} is fully mechanized
in Rocq by Larchey-Wendling and Forster~\cite{larchey2022} in the
\texttt{coq-library-undecidability}.%
\footnote{\url{https://github.com/uds-psl/coq-library-undecidability}}
Section~6 of \texttt{rice.v} mechanizes the logical connection in full. \textbf{Lemma~\texttt{Z\_sep\_to\_bool\_sep}} (no axioms) proves that any Z-valued $(1/0)$ separator for our \texttt{Poly} type yields a boolean separator: given $F$, define $f(ar, p) = \mathtt{Z.eqb}\,(F\,ar\,p)\,1$; the separation conditions hold by rewriting alone. \textbf{Theorem~\texttt{bridge\_logical\_unfolded}} (no axioms) then gives:
\[
  (\neg\,\exists f : \mathtt{bool\_sep})
  \;\Longrightarrow\;
  (\neg\,\exists F : \mathtt{Z\_sep})
\]
which is the logical bridge from Larchey-Wendling's result to \texttt{MRDP\_undecidable}. The arithmetic encoding of our \texttt{Poly} type into the \texttt{h10c} constraint format is fully proved as \textbf{\texttt{Theorem~poly\_encode\_correct}} (zero axioms, zero admitted goals). The sole external axiom in the entire development is \texttt{H10C\_SAT\_undec}
(the undecidability of H10C from \texttt{coq-library-undecidability}).
\texttt{LW\_H10c\_undec} is proved from it with zero admitted goals
and no classical axioms. \texttt{Rice\_Theorem} and
\texttt{Halting\_Problem} are \texttt{Closed under the global context}:
\begin{verbatim}
Print Assumptions LW_H10c_undec.   (* H10C_SAT_undec *)
Print Assumptions Rice_Theorem.    (* Closed under the global context *)
Print Assumptions Halting_Problem. (* Closed under the global context *)
\end{verbatim}
All reasoning in \texttt{rice.v} is explicit and
constructive. The formalization was developed and tested with Rocq
(Coq) version 8.18 or later. The complete development is reproduced in
Appendix~\ref{sec:appendix} and archived at:
\begin{center}
\url{https://github.com/endrazine/rice-constructive}
\end{center}
A stable archived version with DOI will be deposited on Zenodo upon
acceptance.

\bibliography{rice}

\end{document}